\documentclass[11pt,a4paper]{article}
\usepackage[utf8]{inputenc}
\usepackage[T1]{fontenc}
\usepackage{amsmath, amssymb, amsthm, mathtools}
\usepackage{geometry}
\usepackage[round]{natbib}
\usepackage{hyperref}
\usepackage{booktabs}
\newtheorem{theorem}{Theorem}
\newtheorem{lemma}[theorem]{Lemma}
\newtheorem{proposition}[theorem]{Proposition}
\newtheorem{corollary}[theorem]{Corollary}
\newtheorem{conjecture}[theorem]{Conjecture}
\theoremstyle{definition}

\newtheorem{assumption}[theorem]{Assumption}
\newtheorem{example}[theorem]{Example}
\theoremstyle{remark}
\newtheorem{remark}{Remark}

\title{Structural properties of the implicit function defined\\
by an integral self-consistency equation}

\author{Ivan Viakhirev\\
\small ITMO University, St.\ Petersburg, Russian Federation\\
\small \texttt{i.viakhirev@mail.ru}}
\date{}

\begin{document}
\maketitle

\begin{abstract}
We study the integral equation
\[
\int_0^m \frac{\eta\,\rho(\eta)}{C-\eta}\,d\eta = 1, \qquad C>m,
\]
where $\rho$ is a $C^1$ probability density on $[0,M]$ vanishing
polynomially at $\eta=M$. Setting
$\mathcal{I}^+(m):=\lim_{C\downarrow m}\int_0^m \eta\rho(\eta)/(C-\eta)\,d\eta$
and $\Omega:=\{m\in(0,M):\mathcal{I}^+(m)>1\}$, the displayed
equation determines $C$ implicitly as a function of $m$ on $\Omega$,
and our object of study is the dimensionless ratio $\beta(m):=C(m)/m$.

Write $h(\eta):=\eta\rho(\eta)$. Our main result
(Theorem~\ref{thm:structure}) establishes openness of $\Omega$,
$C^1$-smoothness of $\beta$, a sign formula identifying $\beta'(m)$
with a positively-weighted integral of $d h/d\ln\eta$, transfer of
monotonicity from $h$ to $\beta$, and existence of an interior
critical point of $\beta$ when $h$ is unimodal and two technical
hypotheses hold.

Numerically, $\beta$ has a single critical point in seven log-concave
test densities (mostly Beta-type), in support of a separate
uniqueness conjecture. A bimodal density that violates both
unimodality and log-concavity exhibits three critical points; this
shows that dropping the two hypotheses jointly admits multiple
critical points, but does not separate their roles. Applications to
preferential-attachment networks are mentioned only as motivation.
\end{abstract}

\section{Introduction}\label{sec:intro}

\subsection{The equation and the object of study}

Let $M\in(0,\infty)$ and let $\rho$ be a probability density on $[0,M]$. For
$m\in(0,M]$ and $C>m$, define
\begin{equation}\label{eq:F-def}
F(C,m) \;:=\; \int_0^m \frac{\eta\,\rho(\eta)}{C-\eta}\,d\eta - 1.
\end{equation}
We study the equation $F(C,m)=0$.

When this equation has a unique solution $C(m)>m$, the ratio
\begin{equation}\label{eq:beta-def}
\beta(m) \;:=\; \frac{C(m)}{m}
\end{equation}
is the central object of this paper. The set of $m$ for which a solution exists
is characterized by the limit
\begin{equation}\label{eq:I-def}
\mathcal{I}^+(m) \;:=\; \lim_{C\downarrow m}\,(F(C,m)+1) \;=\; \lim_{C\downarrow m}\int_0^m\frac{\eta\rho(\eta)}{C-\eta}d\eta \;\in\; (0,\infty],
\end{equation}
which exists by monotone convergence (the integrand increases as $C\downarrow m$).
When $\mathcal{I}^+(m)<\infty$, the limit equals the improper integral
$\int_0^m\eta\rho(\eta)/(m-\eta)d\eta$.

\subsection{Origin and motivation}

Equations of the form~\eqref{eq:F-def} arise in the mean-field
analysis of preferential-attachment dynamics with heterogeneous
fitness; see Section~\ref{sec:discussion} for references and for the
limits of the analogy. In these applications it is the ratio
$\beta=C/m$, rather than $C(m)$, that carries an intrinsic meaning:
it is dimensionless, it is bounded below by $1$ throughout the
domain, and it controls the relevant degree-distribution exponent.
Analytically, the substitution $\eta=mt$ in~\eqref{eq:F-def} turns
the equation into one for $\beta$ alone on the fixed interval
$[0,1]$ (Lemma~\ref{lem:reparam}), which is what makes the
derivative formula (Theorem~\ref{thm:structure}(c)) clean.

\subsection{Main results}

Write $h(\eta):=\eta\rho(\eta)$. Under
hypotheses~(D0)--(D3) on $\rho$ (Section~\ref{sec:hyp}),
Theorem~\ref{thm:structure} establishes the following five
properties of $\beta$.
\begin{itemize}
\item[(a)] \emph{Domain.} $\beta(m)$ is defined exactly on the open
set $\Omega:=\{m\in(0,M):\mathcal{I}^+(m)>1\}\subset(0,M)$.
\item[(b)] \emph{Smoothness.} $\beta\in C^1(\Omega)$ and $\beta>1$ on $\Omega$.
\item[(c)] \emph{Sign formula.} For $m\in\Omega$,
\begin{equation}\label{eq:sign-intro}
\mathrm{sign}\bigl(\beta'(m)\bigr)
\;=\; \mathrm{sign}\!\left(\int_0^1 \frac{1}{\beta(m)-t}\cdot\frac{dh}{d\ln\eta}\Big|_{\eta=mt}\,dt\right).
\end{equation}
\item[(d)] \emph{Monotonicity transfer from $h$ to $\beta$.} On any
sub-interval of $\Omega$ where $h$ is monotone and non-constant,
$\beta$ is strictly monotone in the same direction.
\item[(e)] \emph{Critical point under unimodality.} Suppose $h$ is
unimodal with mode $\eta^*_h\in(0,M)$, and let $\mathcal{J}(m)$
denote the integral on the right-hand side
of~\eqref{eq:sign-intro}. If there is $m^\sharp\in(\eta^*_h,M)$ with
$[\eta^*_h, m^\sharp]\subset\Omega$ and $\mathcal{J}(m^\sharp)<0$,
then $\beta$ has a critical point $m^\star\in(\eta^*_h, m^\sharp)$.
\end{itemize}

The content of~\eqref{eq:sign-intro} is that $\beta'(m)$ has the
sign of a positive-weight average of the logarithmic derivative
$d h/d\ln\eta$ over $(0, m]$: $\beta$ inherits monotonicity from
the elasticity of $h$, and a critical point of $\beta$ is the
location at which positive and negative contributions of
$d h/d\ln\eta$, weighted by the implicit-function kernel
$1/(\beta(m)-t)$, balance. Part (d) is the case where the
elasticity has a definite sign throughout; part (e) is the simplest
case where it changes sign once. Part~(e) requires two technical
hypotheses, conventionally labelled
\begin{itemize}
\item[(H1)] $\mathcal{J}(m^\sharp)<0$ for some $m^\sharp\in(\eta^*_h,M)$,
\item[(H2)] $[\eta^*_h, m^\sharp]\subset\Omega$,
\end{itemize}
which we will refer to by these names. (H1) is the substantive
sign-change condition; (H2) is automatic whenever $\rho>0$ on
$(0,M)$ (in particular for log-concave $h$). We verify (H1)
numerically for all seven log-concave test densities of
Section~\ref{sec:numerics}.

\paragraph{A uniqueness conjecture.}
If, in addition to the hypotheses of~(e), $h$ is log-concave on
$(0,M)$, we conjecture that $m^\star$ is the unique critical point
of $\beta$ on $\Omega$ (Conjecture~\ref{conj:uniqueness}). A
bimodal example showing what happens when both hypotheses fail is
discussed in Section~\ref{sec:numerics}.

\section{Setup and preliminaries}\label{sec:setup}

\subsection{Hypotheses on $\rho$}\label{sec:hyp}

We use the following hypotheses throughout. Let $M\in(0,\infty)$ and let $\rho$
be a function on $[0,M]$.
\begin{itemize}
\item[\textup{(D0)}] $\rho\ge 0$ and $\int_0^M\rho(\eta)\,d\eta = 1$.
\item[\textup{(D1)}] $\rho\in C^1([0,M])$, with one-sided derivatives at the
endpoints.
\item[\textup{(D2)}] There exist $\alpha>1$, $K<\infty$ and $\eta_0\in(0,M)$
such that $\rho(\eta)\le K(M-\eta)^{\alpha-1}$ for all $\eta\in[\eta_0,M]$.
That is, $\rho$ vanishes polynomially at $\eta=M$ at order at least
$\alpha-1$; in particular $\rho(M)=0$. This holds for Beta-type
densities $\rho(\eta)\propto\eta^a(M-\eta)^b$ with $b\ge 1$ and fails
for any density with $\rho(M)>0$ (log-normal, exponential, Gamma);
the question of extending the analysis past (D2) is taken up in
Section~\ref{sec:discussion}.
\item[\textup{(D3)}] $\rho$ is not identically zero on any subinterval of $(0,M)$.
Equivalently, $\rho>0$ on a positive-measure subset of every
subinterval of $(0,M)$.
\end{itemize}
Define $h(\eta):=\eta\rho(\eta)$. By (D1), $h\in C^1([0,M])$, with
derivative $h'(\eta)=\rho(\eta)+\eta\rho'(\eta)$; by (D2), $h(M)=0$.

\subsection{Preliminary lemmas}

Theorem~\ref{thm:structure} relies on five lemmas: a finiteness
criterion for the limit $\mathcal{I}^+$ (Lemma~\ref{lem:I-finite}),
$C^1$-regularity and strict $C$-monotonicity of $F$
(Lemma~\ref{lem:F-smooth}), boundary values of $F$ in $C$
(Lemma~\ref{lem:F-boundary}), the reparametrisation $\eta=mt$ that
makes $\beta=C/m$ the only unknown (Lemma~\ref{lem:reparam}), and
openness of $\Omega$ (Lemma~\ref{lem:Omega-open}). Of these, only
Lemma~\ref{lem:I-finite} uses the polynomial-decay hypothesis~(D2).

\begin{lemma}[Finiteness criterion for $\mathcal{I}^+$]\label{lem:I-finite}
Under \textup{(D0), (D1), (D2)} with exponent $\alpha>1$:
\begin{enumerate}
\item[\textup{(i)}] At the endpoint $m=M$: $\mathcal{I}^+(M)<\infty$,
driven by the polynomial vanishing of $\rho$ supplied by (D2).
\item[\textup{(ii)}] At interior $m\in(0,M)$: $\mathcal{I}^+(m)<\infty$
iff $\rho(m)=0$, and $\mathcal{I}^+(m)=+\infty$ otherwise. The
integrability when $\rho(m)=0$ comes from the local Lipschitz bound
on $\rho$ implied by (D1).
\end{enumerate}
\end{lemma}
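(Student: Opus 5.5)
Both parts rest on the same observation. Since the integrand $\eta\rho(\eta)/(C-\eta)$ is nonnegative and increases as $C\downarrow m$, monotone convergence identifies $\mathcal{I}^+(m)$ with the possibly infinite integral $\int_0^m \eta\rho(\eta)/(m-\eta)\,d\eta$. Away from the upper endpoint the integrand is bounded: on $[0,m-\delta]$ it is at most $M\|\rho\|_\infty/\delta$, which is finite by (D1). So finiteness is decided entirely by the behaviour of $h(\eta)/(m-\eta)$ on a left neighbourhood $(m-\delta,m)$. In each case I will compare $h$ with a power of $m-\eta$.

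\textbf{Part (i).} Take $m=M$ and choose $\delta$ with $M-\delta\ge\eta_0$. By (D2), on $(M-\delta,M)$ we have
\[
\frac{\eta\rho(\eta)}{M-\eta}\le MK(M-\eta)^{\alpha-2}.
\]
This bound is integrable near $M$ because $\alpha-2>-1$, which is exactly the hypothesis $\alpha>1$. Together with the boundedness on $[0,M-\delta]$, this gives $\mathcal{I}^+(M)<\infty$.

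\textbf{Part (ii), case $\rho(m)=0$.} Let $L:=\max_{[0,M]}|\rho'|$, which is finite by (D1). The mean value theorem gives $\rho(\eta)=\rho(\eta)-\rho(m)\le L(m-\eta)$. Hence the integrand is bounded by $mL$ near $m$, so the integral is finite.

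\textbf{Part (ii), case $\rho(m)>0$.} By continuity there is $\delta>0$ with $\rho\ge\rho(m)/2$ on $[m-\delta,m]$. We may also take $m-\delta\ge m/2$, so that $\eta\ge m/2$ there. On this interval the integrand is at least
\[
\frac{m\rho(m)}{4(m-\eta)},
\]
whose integral diverges logarithmically. Hence $\mathcal{I}^+(m)=+\infty$.

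Note that (D0) guarantees $\rho\ge0$, so the case split in (ii) is exhaustive. The only step needing care is justifying the passage from the $C\downarrow m$ limit to the improper integral, including in the infinite case. This is handled by the monotone convergence theorem applied to nonnegative integrands on $[0,m)$, where $C-\eta>m-\eta>0$. I expect no step to be a genuine obstacle. The mildly delicate point is making sure that the comparison in (i) uses the exponent $\alpha-2>-1$, and that (D2) itself, not merely $\rho(M)=0$, is what is needed there.
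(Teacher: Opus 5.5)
Your proof is correct and follows the paper's argument. It uses the (D2) bound $MK(M-\eta)^{\alpha-2}$ at $m=M$, the mean-value bound $\rho(\eta)\le\|\rho'\|_\infty(m-\eta)$ when $\rho(m)=0$, and a logarithmic divergence from $\rho\ge\rho(m)/2$ near $m$ otherwise. The only cosmetic difference is that you use monotone convergence once to reduce every case to $\int_0^m \eta\rho(\eta)/(m-\eta)\,d\eta$. The paper instead uses dominated convergence when $\rho(m)=0$ and bounds the $C$-dependent integral directly by a logarithm when $\rho(m)>0$.
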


\begin{proof}
\emph{(i):} By monotone convergence (the integrand
$\eta\rho(\eta)/(C-\eta)$ increases pointwise as $C\downarrow M$ to
$\eta\rho(\eta)/(M-\eta)$),
\begin{equation}\label{eq:I-plus-M}
\mathcal{I}^+(M) \;=\; \int_0^M\frac{\eta\rho(\eta)}{M-\eta}\,d\eta,
\end{equation}
which is finite by the following splitting. On $[0,\eta_0]$ the
integrand $\eta\rho(\eta)/(M-\eta)$ is continuous on a compact set
with denominator $\ge M-\eta_0>0$, hence bounded. On $[\eta_0,M]$,
hypothesis (D2) gives
\begin{equation}\label{eq:D2-bound}
\frac{\eta\rho(\eta)}{M-\eta} \;\le\; \frac{MK(M-\eta)^{\alpha-1}}{M-\eta}
\;=\; MK(M-\eta)^{\alpha-2},
\end{equation}
and $\int_{\eta_0}^M(M-\eta)^{\alpha-2}\,d\eta<\infty$ since $\alpha>1$.
Hence $\mathcal{I}^+(M)<\infty$.

\emph{(ii), $\Leftarrow$:} Assume $\rho(m)=0$ with $m\in(0,M)$. By
(D1), $\|\rho'\|_\infty<\infty$, and the mean value theorem at
$\rho(m)=0$ gives $\rho(\eta)\le\|\rho'\|_\infty(m-\eta)$ on $[m/2,m]$.
For $C>m$, $(m-\eta)/(C-\eta)\le 1$, so on $[m/2,m]$ the integrand
$\eta\rho(\eta)/(C-\eta)$ is dominated by $m\,\|\rho'\|_\infty$;
on $[0,m/2]$ it is dominated by $\|\rho\|_\infty$ (because
$C-\eta\ge m/2$). The resulting constant dominator is $L^1$, so
dominated convergence as $C\downarrow m$ gives
\begin{equation}\label{eq:I-plus-interior}
\mathcal{I}^+(m) \;=\; \int_0^m\frac{\eta\rho(\eta)}{m-\eta}\,d\eta \;<\;\infty.
\end{equation}

\emph{(ii), $\Rightarrow$:} Assume $\rho(m)>0$. By continuity,
$\rho(\eta)\ge\rho(m)/2$ on $[m-\delta,m]$ for some $\delta>0$. For $C$
slightly greater than $m$,
\begin{equation}\label{eq:log-blowup}
\int_{m-\delta}^m\frac{\eta\rho(\eta)}{C-\eta}\,d\eta
\;\ge\; \frac{(m-\delta)\rho(m)}{2}\,
\ln\frac{C-(m-\delta)}{C-m} \;\longrightarrow\; +\infty
\quad\text{as } C\downarrow m,
\end{equation}
so $\mathcal{I}^+(m)=+\infty$.
\end{proof}

\begin{remark}[Implication for $\Omega$]\label{rem:Omega-structure}
Lemma~\ref{lem:I-finite}(ii) gives $\mathcal{I}^+(m)=+\infty>1$ at
every $m$ with $\rho(m)>0$, so
\begin{equation}\label{eq:Omega-bracket}
\{m\in(0,M):\rho(m)>0\}\;\subset\;\Omega
\;\subset\;\{m\in(0,M):\rho(m)>0\}\cup\{m:\rho(m)=0\}.
\end{equation}
By (D3), the lower set is dense in $(0,M)$. Whether $M$ is a limit
point of $\Omega$ from below depends on whether $\mathcal{I}^+(M)>1$,
a nontrivial extra condition.
\end{remark}

\begin{lemma}[Smoothness and strict monotonicity in $C$]\label{lem:F-smooth}
Under \textup{(D0), (D1)}:
\begin{enumerate}
\item[\textup{(i)}] $F\in C^1\bigl(\{(C,m): 0<m<M, C>m\}\bigr)$;
\item[\textup{(ii)}] $\partial_C F(C,m) = -\int_0^m \eta\rho(\eta)/(C-\eta)^2\,d\eta \le 0$;
\item[\textup{(iii)}] $\partial_m F(C,m) = m\rho(m)/(C-m)$.
\end{enumerate}
Under additionally \textup{(D3)}: $\partial_C F(C,m) < 0$ strictly for all $(C,m)$ in the domain.
\end{lemma}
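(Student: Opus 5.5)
The plan is to treat $F$ as a parameter-dependent integral $F(C,m)=\int_0^m g(\eta,C)\,d\eta-1$ with $g(\eta,C):=h(\eta)/(C-\eta)$. Then (ii) and (iii) follow from Leibniz's rule, and (i) from continuity of the resulting formulas. Work on the open set $U:=\{(C,m):0<m<M,\ C>m\}$. Fix $(C_0,m_0)\in U$ and choose $\delta>0$ with $C-\eta\ge\delta$ for all $(C,m)$ in a small closed neighbourhood $N$ of $(C_0,m_0)$ and all $\eta\in[0,m]$. This is possible because $C_0-m_0>0$: take $N$ so that $C-m\ge (C_0-m_0)/2=:\delta$. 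On $N$ the integrand and its $C$-derivative $-h(\eta)/(C-\eta)^2$ are continuous in $(\eta,C)$. They are also bounded by $\|h\|_\infty/\delta$ and $\|h\|_\infty/\delta^2$, where $\|h\|_\infty\le M\|\rho\|_\infty<\infty$ by (D1).

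\textbf{Partial derivatives.} For (ii), differentiate under the integral sign in $C$ with $m$ fixed. Dominated convergence applies with the constant dominator $\|h\|_\infty/\delta^2$, integrable on $[0,m]$. This gives $\partial_CF=-\int_0^m h(\eta)/(C-\eta)^2\,d\eta$, which is $\le0$ since $h\ge0$ by (D0). For (iii), the fundamental theorem of calculus applies because $\eta\mapsto h(\eta)/(C-\eta)$ is continuous on $[0,m]$ and slightly beyond. Hence $\partial_mF=h(m)/(C-m)=m\rho(m)/(C-m)$.

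\textbf{Continuity of the partials, giving (i).} The function $m\rho(m)/(C-m)$ is continuous on $U$ because $\rho$ is continuous and $C-m>0$. For $\partial_CF$, write $G(C,m)=\int_0^m k(\eta,C)\,d\eta$ with $k$ continuous and bounded by $\|h\|_\infty/\delta^2$ on $N$. Then
\[
|G(C,m)-G(C_0,m_0)|\le \int_0^{m_0}|k(\eta,C)-k(\eta,C_0)|\,d\eta+\frac{\|h\|_\infty}{\delta^2}|m-m_0|.
\]
The first term tends to $0$ by uniform continuity of $k$ on a compact set, and the second is trivially small. Both partials therefore exist and are continuous on the open set $U$, so $F\in C^1(U)$.

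\textbf{Strict negativity under (D3).} Suppose $\partial_CF(C,m)=0$. The integrand $h(\eta)/(C-\eta)^2$ is continuous and nonnegative, so it must vanish identically on $[0,m]$. Since $C-\eta>0$, this gives $\eta\rho(\eta)=0$ on $[0,m]$, so $\rho\equiv0$ on the subinterval $(0,m)$ of $(0,M)$, contradicting (D3). Hence $\partial_CF<0$.

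The only point needing care is keeping the dominating constants uniform near the diagonal $C=m$, which the choice of $\delta$ handles. The rest is routine.
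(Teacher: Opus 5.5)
Your proposal is correct and follows essentially the same route as the paper. Both arguments use a uniform lower bound on $C-\eta$ away from the diagonal $C=m$ to justify differentiating under the integral sign in $C$, the Leibniz/FTC rule for $\partial_m F$, continuity of both partials to conclude $F\in C^1$, and continuity of $\rho$ together with (D3) for strict negativity. The differences are cosmetic: you work in local neighbourhoods where the paper exhausts the domain by the compact sets $K_{m_0,m_1,\varepsilon}$, and your explicit estimate for the moving upper limit spells out what the paper covers with a one-line appeal to dominated convergence.
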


\begin{proof}
\emph{Part (i)--(iii) under \textup{(D0), (D1)} only.}
We prove $C^1$ on each compact set
$K_{m_0,m_1,\varepsilon} := \{(C,m): m_0\le m\le m_1,\ m+\varepsilon\le C\le 1/\varepsilon\}$
for fixed $0<m_0<m_1<M$, $\varepsilon\in(0,m_0/4)$; the open domain is exhausted
by such sets. Since $C^1$-regularity is a local property, this suffices to
conclude $F\in C^1$ on the full open domain.

\emph{Continuity of $F$.} The integrand $\eta\rho(\eta)/(C-\eta)$ is continuous
in $(C,m,\eta)$ on the relevant set, and bounded by
$M\|\rho\|_\infty/\varepsilon$ uniformly. Dominated convergence gives continuity.

\emph{$\partial_C F$.} The pointwise derivative
$-\eta\rho(\eta)/(C-\eta)^2$ is bounded by
$M\|\rho\|_\infty/\varepsilon^2$, hence integrable, so
differentiation under the integral sign \citep[Thm.~2.27]{folland-1999} applies and
yields the formula for $\partial_C F$. The integrand
$\eta\rho(\eta)/(C-\eta)^2$ is non-negative by (D0), so
$\partial_C F\le 0$.

\emph{$\partial_m F$.} The integrand of $F$ does not depend on $m$
except through the upper limit; the Leibniz rule yields
$\partial_m F = m\rho(m)/(C-m)$, continuous since $\rho\in C^1$ and
$C-m\ge\varepsilon$.

Both partial derivatives are continuous on the compact set
$K_{m_0,m_1,\varepsilon}$ by dominated convergence, and a standard
result on functions of two variables (e.g.\ \citealp[\S2.5]{folland-1999})
gives $F\in C^1$ there. This establishes (i)--(iii) under
(D0), (D1).

\emph{Strict negativity under \textup{(D3)}.} Under (D3), $\rho>0$
on an open subinterval of $(0,m)$, on which the non-negative
integrand $\eta\rho(\eta)/(C-\eta)^2$ is strictly positive. The
integral is therefore strictly positive, giving $\partial_C F<0$
strictly throughout the domain.
\end{proof}

\begin{lemma}[Boundary values]\label{lem:F-boundary}
Under \textup{(D0), (D1)}, for fixed $m\in(0,M]$:
\begin{enumerate}
\item[\textup{(i)}] $\lim_{C\downarrow m} F(C,m) = \mathcal{I}^+(m) - 1$ (with $+\infty$ allowed);
\item[\textup{(ii)}] $\lim_{C\to\infty} F(C,m) = -1$.
\end{enumerate}
\end{lemma}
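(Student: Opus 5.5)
Part~(i) is essentially the definition of $\mathcal{I}^+$ in~\eqref{eq:I-def}. For fixed $m\in(0,M]$ and $C>m$ the integrand $\eta\rho(\eta)/(C-\eta)$ is non-negative by (D0). As $C\downarrow m$ it increases pointwise on $[0,m)$. The monotone convergence theorem therefore gives
\[
\lim_{C\downarrow m}\int_0^m\frac{\eta\rho(\eta)}{C-\eta}\,d\eta=\int_0^m\frac{\eta\rho(\eta)}{m-\eta}\,d\eta\in[0,\infty].
\]
Subtracting $1$ yields $\lim_{C\downarrow m}F(C,m)=\mathcal{I}^+(m)-1$, including the value $+\infty$. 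To apply monotone convergence along a continuous parameter, I will take an arbitrary decreasing sequence $C_n\downarrow m$; the limit does not depend on the sequence, because $C\mapsto F(C,m)$ is monotone. That monotonicity holds since the integrand is pointwise monotone in $C$; Lemma~\ref{lem:F-smooth}(ii) also gives it for $m<M$.

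For part~(ii), fix $m$ and take $C\ge 2M$. For $\eta\in[0,m]$ we have $C-\eta\ge C-M\ge C/2$, so
\[
0\le\int_0^m\frac{\eta\rho(\eta)}{C-\eta}\,d\eta\le\frac{2}{C}\int_0^m\eta\rho(\eta)\,d\eta\le\frac{2M}{C}\int_0^M\rho=\frac{2M}{C}\longrightarrow 0 .
\]
Hence $F(C,m)\to-1$ as $C\to\infty$. Alternatively, dominated convergence with dominator $2\rho/1$ gives the same conclusion.

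There is no real obstacle here. The only point needing care is the case $m=M$, and the argument above does not require (D2) there. Part~(i) permits the value $+\infty$ and only uses monotone convergence. Part~(ii) uses only the normalisation in (D0), and it bounds the integrand using $C-\eta\ge C-M$, which holds even when $m=M$.
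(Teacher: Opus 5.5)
Your proof is correct and follows the paper's route. Part (i) is immediate from the definition of $\mathcal{I}^+$; your monotone-convergence identification of the limit with $\int_0^m \eta\rho(\eta)/(m-\eta)\,d\eta$ is extra but valid. For part (ii) the paper takes $C\ge 2m$ and applies dominated convergence with dominator $\rho$, whereas your bound $0\le F(C,m)+1\le 2M/C$ for $C\ge 2M$ reaches the same conclusion directly and with an explicit rate.
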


\begin{proof}
\emph{(i):} By definition of $\mathcal{I}^+$.

\emph{(ii):} For $C\ge 2m$, $\eta\in[0,m]$: $C-\eta\ge m$, so
$\eta\rho(\eta)/(C-\eta)\le \rho(\eta)$, integrable. Pointwise the integrand
$\to 0$. Dominated convergence gives $F\to-1$.
\end{proof}

\begin{lemma}[Reparametrization]\label{lem:reparam}
For $m>0$ and $C>m$, the substitution $\eta=mt$ gives $F(C,m)+1 = m\Phi(C/m, m)$, where
\begin{equation}\label{eq:Phi-def}
\Phi(\beta,m) \;:=\; \int_0^1 \frac{t\rho(mt)}{\beta-t}\,dt, \qquad \beta>1.
\end{equation}
Equivalently, $F(C,m)=0$ iff $m\Phi(\beta,m)=1$ where $\beta=C/m$.
\end{lemma}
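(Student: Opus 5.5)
The plan is a direct change of variables in the integral defining $F$, followed by a check that the two equations are equivalent. Fix $m>0$ and $C>m$, and put $\beta:=C/m>1$. In $F(C,m)+1=\int_0^m \eta\rho(\eta)/(C-\eta)\,d\eta$ we substitute $\eta=mt$, $t\in[0,1]$, $d\eta=m\,dt$. This map is a $C^1$ increasing bijection $[0,1]\to[0,m]$, so the substitution rule for Riemann integrals applies without qualification. The integrand is continuous on $[0,m]$ because $\rho$ is continuous by (D1) and $C-\eta\ge C-m>0$; hence nothing is improper. The numerator becomes $mt\,\rho(mt)$ and the denominator becomes $C-mt=m(\beta-t)$. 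Therefore
\begin{equation*}
F(C,m)+1=\int_0^1\frac{mt\,\rho(mt)}{m(\beta-t)}\,m\,dt=m\int_0^1\frac{t\rho(mt)}{\beta-t}\,dt=m\,\Phi(C/m,m).
\end{equation*}

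Two small points need attention. First, $\Phi(\beta,m)$ must be well defined for $\beta>1$. When $m\le M$, the argument $mt$ stays in $[0,M]$, and since $\beta-t\ge\beta-1>0$ the integrand is continuous on $[0,1]$. Second, the equivalence follows from the identity $F(C,m)=m\Phi(\beta,m)-1$. Thus $F(C,m)=0$ holds exactly when $m\Phi(\beta,m)=1$. Conversely, each $\beta>1$ corresponds to the unique $C=m\beta>m$, so the correspondence $C\leftrightarrow\beta$ is a bijection between $(m,\infty)$ and $(1,\infty)$.

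There is no real obstacle here. The only care needed is to state the domain implicitly assumed in the lemma, namely $m\in(0,M]$ so that $\rho(mt)$ is defined, and to observe that $C>m$ is what keeps the integral proper. This remark will matter later, when the limit $\beta\downarrow1$ is related to $\mathcal{I}^+(m)$ through monotone convergence in exactly the same way.
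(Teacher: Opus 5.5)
Your proposal is correct and is the same argument as the paper's: the substitution $\eta=mt$, together with $C-mt=m(\beta-t)$, gives $F(C,m)+1=m\Phi(C/m,m)$, and the equivalence follows at once. Your remarks that the lemma tacitly needs $m\in(0,M]$ and that $C>m$ keeps the integral proper are accurate refinements the paper leaves implicit.
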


\begin{proof}
The substitution $\eta=mt$, $d\eta=m\,dt$ sends $\eta\in[0,m]$ to $t\in[0,1]$, and
\begin{equation}\label{eq:reparam-comp}
\int_0^m\frac{\eta\rho(\eta)}{C-\eta}\,d\eta
\;=\; \int_0^1\frac{(mt)\rho(mt)\cdot m}{C-mt}\,dt
\;=\; m\int_0^1\frac{t\rho(mt)}{(C/m)-t}\,dt
\;=\; m\,\Phi(\beta,m).
\end{equation}
\end{proof}

\begin{lemma}[Openness of $\Omega$]\label{lem:Omega-open}
Define $\Omega:=\{m\in(0,M):\mathcal{I}^+(m)>1\}$. Under \textup{(D0), (D1), (D2), (D3)},
$\Omega$ is open in $(0,M)$.
\end{lemma}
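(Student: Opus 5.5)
The plan is to show that $\mathcal{I}^+$, viewed as a map $(0,M)\to(0,\infty]$, is \emph{lower semicontinuous}. Then $\Omega=\{\mathcal{I}^+>1\}$ is open as a strict superlevel set of a lower semicontinuous function. This avoids splitting into the cases $\rho(m)>0$ and $\rho(m)=0$ of Lemma~\ref{lem:I-finite}(ii), although that lemma together with Remark~\ref{rem:Omega-structure} already handles the first case directly: $\{\rho>0\}$ is open by continuity of $\rho$ and lies in $\Omega$.

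\textbf{Step 1: express $\mathcal{I}^+$ as an improper integral.} I first show that for every $m\in(0,M)$,
\[
\mathcal{I}^+(m)=\int_0^m\frac{h(\eta)}{m-\eta}\,d\eta\in(0,\infty],
\]
in both the finite and the infinite case. The integrand $h(\eta)/(C-\eta)$ is nonnegative and increases pointwise to $h(\eta)/(m-\eta)$ as $C\downarrow m$. Monotone convergence then gives the identity with values in $[0,\infty]$, so no finiteness assumption is needed.

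\textbf{Step 2: approximate by truncated integrals.} For $\varepsilon>0$ define
\[
G_\varepsilon(m):=\int_0^{m-\varepsilon}\frac{h(\eta)}{m-\eta}\,d\eta=\int_\varepsilon^{m}\frac{h(m-s)}{s}\,ds \qquad (m>\varepsilon),
\]
and set $G_\varepsilon(m):=0$ for $m\le\varepsilon$. After the substitution $s=m-\eta$, the integrand is jointly continuous in $(s,m)$ and bounded by $\|h\|_\infty/\varepsilon$, and the upper limit $m$ varies continuously. Hence $G_\varepsilon$ is continuous on $(0,M)$. At $m=\varepsilon$ the integral tends to $0$, which matches the extension by zero.

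As $\varepsilon\downarrow0$, $G_\varepsilon(m)$ increases to $\mathcal{I}^+(m)$. Write $G_\varepsilon(m)=\int_0^m \mathbf 1_{\{\eta<m-\varepsilon\}}\,h(\eta)/(m-\eta)\,d\eta$; the integrand is nonnegative and increases pointwise as $\varepsilon\downarrow0$, so monotone convergence applies. Therefore $\mathcal{I}^+=\sup_{\varepsilon>0}G_\varepsilon$ is a supremum of continuous functions, hence lower semicontinuous.

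\textbf{Step 3: conclude openness.} Take $m_0\in\Omega$. Since $\mathcal{I}^+(m_0)>1$, there is $\varepsilon$ with $G_\varepsilon(m_0)>1$. By continuity of $G_\varepsilon$, there is a neighbourhood $U\subset(0,M)$ of $m_0$ with $G_\varepsilon>1$ on $U$. Because $\mathcal{I}^+\ge G_\varepsilon$, we get $U\subset\Omega$.

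The only delicate point is Step 1 when $\rho(m)>0$, where $\mathcal{I}^+(m)=+\infty$. Here I rely solely on monotone convergence in $[0,\infty]$, so the identity and the supremum representation hold without any integrability assumption. Hypotheses (D2) and (D3) are not actually needed for this lemma. Only (D0) is used, for nonnegativity, and (D1), for continuity and boundedness of $h$.
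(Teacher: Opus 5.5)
Your proof is correct and is essentially the paper's argument. Both proofs exhibit a function that is continuous in $m$, minorises $\mathcal{I}^+$, and exceeds $1$ at $m_0$; where you use the truncations $G_\varepsilon$, the paper uses $F(C_0,\cdot)+1$ for a fixed $C_0>m_0$ (continuous by Lemma~\ref{lem:F-smooth}, on a neighbourhood shrunk so that $m<C_0$), and your observation that only (D0) and (D1) are needed is also right, since the paper's appeal to (D3) through strict monotonicity in $C$ is used only for $\sup_{C>m}F(C,m)\ge F(C_0,m)$, which holds trivially.
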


\begin{proof}
Fix $m_0\in\Omega$.

\emph{Case $\mathcal{I}^+(m_0)<\infty$:} By Lemma~\ref{lem:F-boundary}(i),
$\lim_{C\downarrow m_0}F(C,m_0) = \mathcal{I}^+(m_0) - 1 > 0$, so by strict
inequality and continuity in $C$ there exists $C_0 > m_0$ with $F(C_0,m_0)>0$.

\emph{Case $\mathcal{I}^+(m_0)=+\infty$:} By
Lemma~\ref{lem:F-boundary}(i) (with $+\infty$ allowed), there exists
$\delta_0>0$ with $F(m_0+\delta_0, m_0)>0$. Set $C_0:=m_0+\delta_0$;
by Lemma~\ref{lem:F-smooth}, $F$ is continuous in a neighbourhood of
$(C_0,m_0)$, so there exists $\varepsilon>0$ with $F>0$ on the open
ball $B((C_0,m_0),\varepsilon)$. Pick
$\varepsilon'<\min(\varepsilon,\delta_0/2)$ (the bound $\delta_0/2$
keeps the projection $U:=\{m:|m-m_0|<\varepsilon'\}$ strictly below
$C_0$, so $C_0>m$ on $U$ and the pair $(C_0,m)$ stays in the domain
of $F$). Then $F(C_0,m)>0$ for all $m\in U$.

In both cases we obtain an open neighborhood $U\ni m_0$ in $(0,M)$ with
$\sup U < C_0$ and $F(C_0,m)>0$ for all $m\in U$.

For $m\in U$: $F(\cdot,m)$ is strictly decreasing in $C$
(Lemma~\ref{lem:F-smooth}(ii) under (D3)) with $F(C_0,m)>0$ and $F\to-1$ as $C\to\infty$.
Hence $\sup_{C>m}F(C,m)\ge F(C_0,m)>0$, which by Lemma~\ref{lem:F-boundary}(i)
equals $\mathcal{I}^+(m)-1$. Therefore $\mathcal{I}^+(m)>1$ and $m\in\Omega$.
\end{proof}

\section{The structural theorem}\label{sec:main}

\begin{assumption}[Unimodality]\label{ass:U}
$h(\eta)=\eta\rho(\eta)$ has a mode $\eta^*_h\in(0,M)$, in the sense that $h$ is
strictly increasing on $(0,\eta^*_h)$ and strictly decreasing on $(\eta^*_h, M)$.
\end{assumption}

Define
\begin{equation}\label{eq:J-def}
\mathcal{J}(m) \;:=\; \int_0^1 \frac{1}{\beta(m)-t}\cdot\frac{dh}{d\ln\eta}\Big|_{\eta=mt}\,dt
\;=\; m\int_0^1 \frac{t\,h'(mt)}{\beta(m)-t}\,dt,
\end{equation}
the second equality by $dh/d\ln\eta=\eta h'(\eta)$ at $\eta=mt$, which gives
$(dh/d\ln\eta)|_{\eta=mt}=mt\,h'(mt)$ and pulls the factor $m$ outside the
integral. The integrand of the second form is continuous on $[0,1]$ and
bounded by $\|h'\|_\infty/(\beta(m)-1)$, so $\mathcal{J}(m)$ is well-defined
wherever $\beta(m)$ is.

\begin{theorem}[Properties of the implicit ratio $\beta$]\label{thm:structure}
Assume \textup{(D0), (D1), (D2), (D3)} with exponent $\alpha>1$, and set
\begin{equation}\label{eq:Omega-def}
\Omega \;:=\; \{m\in(0,M) : \mathcal{I}^+(m)>1\}.
\end{equation}
Then:
\begin{enumerate}
\item[\textup{(a)}] \emph{(Domain.)} $\Omega$ is open in $(0,M)$. For each
$m\in\Omega$, equation $F(C,m)=0$ has a unique solution $C(m)\in(m,\infty)$. For
$m\in(0,M)\setminus\Omega$, no solution exists.
\item[\textup{(b)}] \emph{(Smoothness.)} $\beta(m) := C(m)/m\in C^1(\Omega)$, with $\beta(m)>1$ on $\Omega$.
\item[\textup{(c)}] \emph{(Sign formula.)} On $\Omega$,
\begin{equation}\label{eq:beta-prime}
\beta'(m) \;=\; \frac{\mathcal{J}(m)}{-m^2\,\partial_\beta\Phi(\beta(m),m)},
\qquad
\mathrm{sign}\bigl(\beta'(m)\bigr) \;=\; \mathrm{sign}\bigl(\mathcal{J}(m)\bigr),
\end{equation}
with $\partial_\beta\Phi(\beta(m),m)<0$ so the denominator
of~\eqref{eq:beta-prime} is strictly positive.
\item[\textup{(d)}] \emph{(Monotonicity transfer.)} If $h$ is non-decreasing and not
constant on $(0,m)$, then $\beta'(m)>0$. If $h$ is non-increasing and not
constant on $(0,m)$, then $\beta'(m)<0$.
\item[\textup{(e)}] \emph{(Critical point under unimodality.)} Suppose
Assumption~\ref{ass:U} holds and:
\begin{itemize}
\item[\textup{(H1)}] there exists $m^\sharp\in(\eta^*_h, M)$ with $\mathcal{J}(m^\sharp)<0$;
\item[\textup{(H2)}] $[\eta^*_h, m^\sharp]\subset\Omega$.
\end{itemize}
Then there exists $m^\star\in(\eta^*_h, m^\sharp)$ with $\mathcal{J}(m^\star)=0$,
equivalently $\beta'(m^\star)=0$.
\end{enumerate}
Beyond Theorem~\ref{thm:structure} we conjecture a uniqueness statement
under log-concavity of $h$; see Conjecture~\ref{conj:uniqueness} and
Section~\ref{sec:numerics}.
\end{theorem}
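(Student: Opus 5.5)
Parts (a) and (b) follow from the preliminary lemmas and the implicit function theorem; (c) comes from differentiating $m\Phi(\beta(m),m)=1$; (d) and (e) reduce to sign information on $\mathcal{J}$.

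\emph{Part (a).} Openness is Lemma~\ref{lem:Omega-open}. For fixed $m\in(0,M)$, Lemma~\ref{lem:F-smooth} (under (D3)) makes $F(\cdot,m)$ continuous and strictly decreasing on $(m,\infty)$. By Lemma~\ref{lem:F-boundary}, its range is the open interval $(-1,\mathcal{I}^+(m)-1)$. So a zero exists iff $\mathcal{I}^+(m)>1$, and it is then unique by strict monotonicity. This gives existence and uniqueness on $\Omega$ and nonexistence off $\Omega$.

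\emph{Part (b).} At $(C(m_0),m_0)$ we have $\partial_CF<0$, and $F\in C^1$ by Lemma~\ref{lem:F-smooth}. The implicit function theorem gives a $C^1$ local solution. By uniqueness it coincides with $C(m)$, so $C\in C^1(\Omega)$ and $\beta=C/m\in C^1(\Omega)$. Finally $\beta>1$ because $C(m)>m$.

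\emph{Part (c).} By Lemma~\ref{lem:reparam}, $G(\beta,m):=m\Phi(\beta,m)-1$ vanishes along $\beta(m)$. Since $\beta$ is $C^1$, I differentiate the identity $G(\beta(m),m)=0$ directly. Rather than differentiating $\Phi$ in $m$ under the integral, I compute $\partial_m[m\Phi]$ from $\eta\rho(\eta)/(C-\eta)$ with $C=\beta m$ held proportional. Write
\[
m\Phi(\beta,m)=\int_0^1\frac{h(mt)}{\beta-t}\,dt,
\]
using $t\rho(mt)\,m=h(mt)$. Then
\[
\partial_m(m\Phi)=\int_0^1\frac{t\,h'(mt)}{\beta-t}\,dt=\mathcal{J}(m)/m .
\]
This is justified by dominated convergence, since $h\in C^1$ and $\beta-t\ge\beta-1>0$. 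Also $\partial_\beta(m\Phi)=m\,\partial_\beta\Phi=-m\int_0^1 t\rho(mt)/(\beta-t)^2dt$, which is strictly negative by (D3). Hence
\[
\beta'=-\frac{\mathcal{J}/m}{m\,\partial_\beta\Phi}=\frac{\mathcal{J}(m)}{-m^2\partial_\beta\Phi},
\]
with positive denominator, so $\beta'$ has the sign of $\mathcal{J}$.

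\emph{Part (d).} If $h$ is non-decreasing on $(0,m)$, then $h'\ge0$ there. The integrand $t\,h'(mt)/(\beta-t)$ is then nonnegative and continuous. It is strictly positive somewhere, because otherwise $h'\equiv0$ on $(0,m)$ and $h$ would be constant. Hence $\mathcal{J}>0$, and the non-increasing case is symmetric.

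\emph{Part (e).} For $m\in(0,\eta^*_h]\cap\Omega$, $h$ is strictly increasing on $(0,m)$, so part (d) gives $\mathcal{J}(m)>0$. In particular $\mathcal{J}(\eta^*_h)>0$, using $\eta^*_h\in\Omega$ from (H2).

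The main obstacle is continuity of $\mathcal{J}$ on $[\eta^*_h,m^\sharp]$. By (H2) this interval lies in $\Omega$, where $\beta$ is continuous, and $\beta>1$ there. By compactness $\beta-1\ge c>0$ on the interval, so the integrand is continuous jointly in $(m,t)$ and uniformly bounded by $\|h'\|_\infty/c$. Dominated convergence then makes $\mathcal{J}$ continuous. Since $\mathcal{J}(m^\sharp)<0$ by (H1), the intermediate value theorem yields $m^\star\in(\eta^*_h,m^\sharp)$ with $\mathcal{J}(m^\star)=0$. By part (c) this is equivalent to $\beta'(m^\star)=0$.
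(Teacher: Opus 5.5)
Your proposal is correct and follows the paper's proof essentially step for step: strict monotonicity of $F(\cdot,m)$ plus its two boundary limits for (a), the implicit function theorem plus uniqueness for (b), differentiating $G(\beta(m),m)=m\Phi(\beta(m),m)-1\equiv 0$ for (c), the sign of the integrand of $\mathcal{J}$ for (d), and for (e) applying (d) at the mode, continuity of $\mathcal{J}$ on the compact interval $[\eta^*_h,m^\sharp]$, and the intermediate value theorem. One small point to state explicitly is that the chain rule in (c) needs $G$ to be jointly $C^1$, not just to have partial derivatives; this follows at once from $G(\beta,m)=F(\beta m,m)$ and the $C^1$ regularity of $F$.
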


\begin{remark}[Reading the sign formula]\label{rem:sign-formula-comments}
The integrand of $\mathcal{J}(m)$ in either form of~\eqref{eq:J-def}
is continuous on $[0,1]$: the denominator $\beta(m)-t\ge\beta(m)-1>0$
is bounded away from zero on $\Omega$ and $h'$ is bounded by (D1).
The prefactor $1/|\partial_\beta\Phi(\beta(m),m)|$
in~\eqref{eq:beta-prime} becomes large as $m\to\partial\Omega$ (where
$\beta(m)\downarrow 1$), so the implicit-function setup degenerates at
the boundary of the domain.
\end{remark}

The per-point monotonicity in part (d) lifts to a sub-interval statement:

\begin{corollary}[Sub-interval monotonicity]\label{cor:subinterval}
If $h$ is non-decreasing and not constant on $(0,m)$ for every $m$ in a
sub-interval $J\subset\Omega$ (in particular, if $h$ is non-decreasing and
not constant on $(0,\sup J)$), then $\beta$ is strictly increasing on $J$.
Symmetrically, if $h$ is non-increasing and not constant on $(0,m)$ for
every $m\in J$, then $\beta$ is strictly decreasing on $J$.
\end{corollary}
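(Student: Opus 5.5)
The corollary follows from the pointwise statement in Theorem~\ref{thm:structure}(d) together with the mean value theorem on intervals. I treat the non-decreasing case; the non-increasing case is identical with all signs reversed.

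\emph{Step 1 (pointwise sign).} Take any $m\in J$. By hypothesis $h$ is non-decreasing and not constant on $(0,m)$, and $m\in\Omega$. Theorem~\ref{thm:structure}(d) then gives $\beta'(m)>0$, so $\beta'>0$ at every point of $J$.

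\emph{Step 2 (from pointwise to interval).} By Theorem~\ref{thm:structure}(b), $\beta\in C^1(\Omega)$, so $\beta$ is differentiable on $J$. Take $m_1<m_2$ in $J$. Since $J$ is an interval, $[m_1,m_2]\subset J\subset\Omega$. The mean value theorem yields $\xi\in(m_1,m_2)$ with
\[
\beta(m_2)-\beta(m_1)=\beta'(\xi)(m_2-m_1).
\]
Here $\xi\in J$, so Step~1 gives $\beta'(\xi)>0$. Hence $\beta(m_2)>\beta(m_1)$, and $\beta$ is strictly increasing on $J$. The one point needing care is that $J$ must be an interval contained in $\Omega$, so that $[m_1,m_2]$ never leaves the domain of $\beta$; this is exactly what the hypothesis $J\subset\Omega$ supplies.

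\emph{Step 3 (the parenthetical sufficient condition).} Suppose $h$ is non-decreasing and not constant on $(0,\sup J)$, and fix $m\in J$. Monotonicity passes to the subinterval $(0,m)$ at once. Non-constancy on $(0,m)$ is the main obstacle: $h$ could be constant on $(0,m)$ and increase only on $(m,\sup J)$.

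To exclude this, I use $h(0)=0\cdot\rho(0)=0$ together with (D3). Since $h$ is continuous on $[0,M]$ by (D1), constancy on $(0,m)$ would force $h\equiv h(0)=0$ there, hence $\rho\equiv 0$ on $(0,m)$. This contradicts (D3). So $h$ is non-constant on $(0,m)$ for every $m\in J$, and Steps~1 and~2 apply.

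The non-increasing case runs the same way. For its parenthetical analogue, the same argument gives non-constancy: $h\equiv 0$ on $(0,m)$ is again ruled out by (D3).
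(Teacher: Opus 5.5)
Your proof is correct and matches the route the paper intends: the paper gives no separate proof, only remarking that the pointwise conclusion of part~(d) lifts to sub-intervals, and your argument (apply (d) at each $m\in J$, then the mean value theorem on $[m_1,m_2]\subset J\subset\Omega$) is exactly that step. Your Step~3 also makes explicit something the paper leaves unsaid: ``not constant on $(0,m)$'' is automatic, because $h(0)=0$ and $h$ is continuous, and either (D3) or simply $m\in\Omega$ forces $\rho\not\equiv 0$ on $(0,m)$; by the same reasoning the non-increasing half is in fact vacuous, since a non-increasing $h\ge 0$ with $h(0)=0$ must vanish identically on $(0,m)$.
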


\begin{remark}[On (H1)]\label{rem:H1-genuine}
Identifying analytical sufficient conditions on $\rho$ that imply
(H1) -- for instance, via asymptotics of $\beta(m)$ as $m\to M$ --
is left open; see Section~\ref{sec:discussion}.
\end{remark}

\begin{proposition}[Behavior of $\beta$ at $\partial\Omega$]\label{rem:boundary}
If $m_n\in\Omega$ with $m_n\to m_\infty\in\partial\Omega\cap[0,M]$ and
$\mathcal{I}^+(m_\infty)=1$, then $\beta(m_n)\to 1$.
\end{proposition}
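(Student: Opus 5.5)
The plan is a compactness-and-contradiction argument built on the reparametrised equation of Lemma~\ref{lem:reparam}. For each $n$ we have $m_n\Phi(\beta(m_n),m_n)=1$ with $\beta_n:=\beta(m_n)>1$ by Theorem~\ref{thm:structure}(b). It suffices to show that every subsequence of $(\beta_n)$ has a further subsequence tending to $1$. So I fix a subsequence, still called $\beta_n$, and pass to a further subsequence with $\beta_n\to\bar\beta\in[1,\infty]$. I then rule out $\bar\beta=\infty$ and $\bar\beta\in(1,\infty)$.

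\emph{Case $\bar\beta=\infty$.} The elementary bound
\[
0\le m_n\Phi(\beta_n,m_n)\le m_n\int_0^1\frac{t\rho(m_nt)}{\beta_n-t}\,dt\le \frac{M\|\rho\|_\infty}{\beta_n-1}
\]
uses only (D1) and $m_n\le M$. The right-hand side tends to $0$, which contradicts $m_n\Phi(\beta_n,m_n)=1$. The same bound also disposes of $m_\infty=0$ whenever $\bar\beta>1$: there the right side is at most $m_n\|\rho\|_\infty/(\bar\beta-1+o(1))\to 0$. (If $m_\infty=0$ one has $\mathcal{I}^+(0)=0\neq 1$ anyway, so that case is excluded by hypothesis.)

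\emph{Case $1<\bar\beta<\infty$ and $m_\infty\in(0,M]$.} First I check that $\Phi$ is jointly continuous at $(\bar\beta,m_\infty)$. For $n$ large the integrand $t\rho(m_nt)/(\beta_n-t)$ is bounded by $\|\rho\|_\infty/((\bar\beta-1)/2)$. It also converges pointwise to $t\rho(m_\infty t)/(\bar\beta-t)$, because $\rho\in C([0,M])$; this holds equally at $m_\infty=M$. Dominated convergence then gives
\[
m_\infty\Phi(\bar\beta,m_\infty)=1,
\]
that is, $F(\bar C,m_\infty)=0$ with $\bar C:=\bar\beta m_\infty>m_\infty$.

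Next I show $F(\cdot,m_\infty)$ is strictly decreasing on $(m_\infty,\infty)$. Its $C$-derivative is $-\int_0^{m_\infty}\eta\rho/(C-\eta)^2\,d\eta<0$ by (D3). This is the argument of Lemma~\ref{lem:F-smooth}; at $m_\infty=M$ it repeats verbatim, since for $C>M$ the integrand is bounded. By Lemma~\ref{lem:F-boundary}(i), $\sup_{C>m_\infty}F(C,m_\infty)=\lim_{C\downarrow m_\infty}F(C,m_\infty)=\mathcal{I}^+(m_\infty)-1=0$. Strict monotonicity therefore gives $F(\bar C,m_\infty)<0$, which is a contradiction.

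Hence $\bar\beta=1$ along every such subsequence, and so $\beta(m_n)\to1$. The only step needing care is the joint continuity of $\Phi$ and the strict decrease of $F$ at the endpoint $m_\infty=M$, which lies outside the domain treated in Lemma~\ref{lem:F-smooth}. I expect this to be harmless: once $\beta$ is bounded away from $1$, the denominators are bounded away from $0$, so (D2) is not even needed here.
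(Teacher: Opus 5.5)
Your proof is correct and follows the paper's argument: you pass to a subsequence with $\beta(m_n)\to\bar\beta\in[1,\infty]$, and you rule out $\bar\beta\in(1,\infty)$ exactly as the paper does, by passing to the limit in $F(C(m_n),m_n)=0$ and using that the strictly decreasing $F(\cdot,m_\infty)$ stays below its supremum $\mathcal{I}^+(m_\infty)-1=0$. The differences are minor. You dispose of $\bar\beta=\infty$ with the direct bound $m\Phi(\beta,m)\le M\|\rho\|_\infty/(\beta-1)$, whereas the paper compares with a fixed $C^*$ via Lemma~\ref{lem:F-boundary}(ii) and monotonicity in $C$; your route also avoids the paper's step from $\beta(m_n)\to\infty$ to $C_n=\beta(m_n)m_n\to\infty$, which needs $m_\infty>0$. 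You also explicitly justify the joint continuity and strict monotonicity at $m_\infty=M$, which lie outside the domain of Lemma~\ref{lem:F-smooth} and which the paper uses without comment.
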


\begin{proof}
Passing to a subsequence, either $\beta(m_n)\to\beta_\infty\in[1,\infty)$
or $\beta(m_n)\to\infty$.

\emph{Case $\beta(m_n)\to\infty$.} Equivalently $C_n=\beta(m_n)m_n\to\infty$.
Fix $C^*\ge 2M$ large enough that $F(C^*, m_\infty)<0$ (possible by
Lemma~\ref{lem:F-boundary}(ii)). By continuity of $F$ in $m$,
$F(C^*, m_n)\to F(C^*, m_\infty)<0$. Since $C_n\to\infty$, eventually
$C_n>C^*$, and then $F(C_n,m_n)\le F(C^*,m_n)<0$ by monotonicity of
$F$ in $C$ (Lemma~\ref{lem:F-smooth}(ii)), contradicting $F(C_n,m_n)=0$.

\emph{Case $\beta_\infty>1$.} Then
$F(\beta_\infty m_\infty, m_\infty)
<\sup_{C>m_\infty}F(C,m_\infty)=\mathcal{I}^+(m_\infty)-1=0$,
again contradicting $F(C(m_n),m_n)=0\to F(\beta_\infty m_\infty, m_\infty)$.
\end{proof}

\section{Proof of Theorem~\ref{thm:structure}}\label{sec:proof}

The proof draws on Lemmas \ref{lem:I-finite}--\ref{lem:Omega-open}.
Part (a) is the implicit existence statement: for each $m\in\Omega$,
$F(\cdot,m)$ is strictly decreasing on $(m,\infty)$
(Lemma~\ref{lem:F-smooth}) and crosses zero between its
$+\infty$-side limit $\mathcal{I}^+(m)-1>0$
(Lemma~\ref{lem:F-boundary}(i)) and its limit $-1$ at infinity
(Lemma~\ref{lem:F-boundary}(ii)), so $C(m)$ exists and is unique.
Smoothness in (b) is then the implicit-function theorem. The sign
formula (c) comes from differentiating
$G(\beta(m),m)\equiv 0$ in the reparametrisation~\eqref{eq:Phi-def}
of Lemma~\ref{lem:reparam}. Parts (d) and (e) follow from (c).

\subsection{Proof of (a): domain}

Openness of $\Omega$ is Lemma~\ref{lem:Omega-open}.

For $m\in\Omega$: Lemma~\ref{lem:F-smooth}(ii) gives $F(\cdot,m)$ strictly
decreasing on $(m,\infty)$. Lemma~\ref{lem:F-boundary}(i) gives
$\lim_{C\downarrow m}F=\mathcal{I}^+(m)-1>0$. Lemma~\ref{lem:F-boundary}(ii)
gives $F\to-1<0$. By continuity and strict monotonicity, a unique
$C(m)\in(m,\infty)$ satisfies $F(C(m),m)=0$.

For $m\in(0,M)\setminus\Omega$: $\sup_{C>m}F(C,m)=\mathcal{I}^+(m)-1\le 0$.
Since $F(\cdot,m)$ is strictly decreasing in $C$ (Lemma~\ref{lem:F-smooth}(ii)),
the supremum equals the limit as $C\downarrow m$ and is not attained. Hence
$F(C,m)<\mathcal{I}^+(m)-1\le 0$ for all $C>m$, and the equation $F(C,m)=0$
has no solution.

\subsection{Proof of (b): smoothness and bound}

Fix $m_0\in\Omega$. Since $C(m_0)>m_0$ by part (a), the point $(C(m_0), m_0)$
lies in the open domain $\{(C,m):C>m,\,0<m<M\}$ where $F\in C^1$
(Lemma~\ref{lem:F-smooth}). Since $\Omega$ is open (part (a)), there is an open
neighborhood $V\subset\Omega$ of $m_0$. By Lemma~\ref{lem:F-smooth},
$F\in C^1$ near $(C(m_0),m_0)$, and $\partial_C F<0$ there. The implicit
function theorem \citep[see e.g.][]{krantz-parks-2002} produces a $C^1$
function $\widetilde C$ on some open
$V_0\subset V$ containing $m_0$, with $F(\widetilde C(m),m)=0$ and
$\widetilde C(m_0)=C(m_0)$. Uniqueness of the solution at each $m\in V_0$
(part (a)) forces $\widetilde C=C$ on $V_0$. Since $m_0$ is arbitrary,
$C\in C^1(\Omega)$, hence $\beta=C/m\in C^1(\Omega)$.

Bound $\beta(m)>1$: $C(m)>m$ by part (a).

\subsection{Proof of (c): sign formula}

We work with the reparametrized form (Lemma~\ref{lem:reparam}): on $\Omega$,
$G(\beta(m),m)=0$ where $G(\beta,m):=m\Phi(\beta,m)-1$.

\emph{Step 1: Setup.} The proof of (c) uses the $C^1$ regularity of $\beta$ established in (b); the chain rule below applies. Fix $m_0\in\Omega$. By openness ($\Omega$ open) and part
(b) ($\beta$ continuous), there is a compact neighborhood $K\subset\Omega$ of
$m_0$ on which $\beta_K:=\min_{m\in K}\beta(m)>1$ is attained. Pick
$\beta_-\in(1,\beta_K)$ and $\beta_+>\max_{m\in K}\beta(m)$.

\emph{Step 2: Partial derivatives of $G$.}
On the rectangle $[\beta_-,\beta_+]\times K$ the denominator
$\beta-t$ is bounded below by $\beta_- - 1>0$, and $\rho$ and $h'$ are
bounded by (D1). The pointwise $\beta$- and $m$-derivatives of the
integrand of $G+1$ are therefore dominated by constants
($\|\rho\|_\infty/(\beta_- - 1)^2$ and $\|h'\|_\infty/(\beta_- - 1)$
respectively), so differentiation under the integral sign applies
and gives
\begin{align}
\partial_\beta\Phi(\beta,m) &= -\int_0^1\frac{t\,\rho(mt)}{(\beta-t)^2}\,dt \;<\;0,\label{eq:dbeta-Phi}\\
\partial_m G(\beta,m) &= \phantom{-}\int_0^1\frac{t\,h'(mt)}{\beta-t}\,dt,\label{eq:dm-G}
\end{align}
both continuous in $(\beta,m)$ by dominated convergence. The strict
inequality $\partial_\beta\Phi<0$ uses (D3). Hence $G\in C^1$ on
$[\beta_-,\beta_+]\times K$, and $\partial_\beta G=m\,\partial_\beta\Phi<0$.

\emph{Step 3: Identify $\partial_m G$ with $\mathcal{J}(m)/m$.}
The second form of~\eqref{eq:J-def} gives directly
\begin{equation}\label{eq:dm-G-J}
\partial_m G(\beta(m),m) \;=\; \int_0^1 \frac{t\,h'(mt)}{\beta(m)-t}\,dt
\;=\; \frac{\mathcal{J}(m)}{m}.
\end{equation}

\emph{Step 4: Chain rule.} Differentiating $G(\beta(m),m)\equiv 0$
at $(\beta(m),m)$ gives
$0 = \partial_\beta G\cdot\beta'(m) + \partial_m G$, so
\begin{equation}\label{eq:chain-rule}
\beta'(m) \;=\; -\frac{\mathcal{J}(m)/m}{m\,\partial_\beta\Phi(\beta(m),m)}
\;=\; \frac{\mathcal{J}(m)}{-m^2\,\partial_\beta\Phi(\beta(m),m)}.
\end{equation}
The prefactor is strictly positive on $\Omega$ since $\partial_\beta\Phi<0$,
so $\mathrm{sign}(\beta'(m))=\mathrm{sign}(\mathcal{J}(m))$.

\subsection{Proof of (d): monotone case}

Suppose $h$ is non-decreasing and not constant on $(0,m)$. By (D1), $h\in C^1$,
so $h'\ge 0$ everywhere on $(0,m)$ (a $C^1$ non-decreasing function has
non-negative derivative pointwise).

If $h'\equiv 0$ on $(0,m)$, then $h$ is constant, contradicting the
non-constancy assumption. Since $h$ is non-constant on $(0,m)$, there exist
$a<b$ in $(0,m)$ with $h(a)<h(b)$; by the mean value theorem, there is
$\eta_1\in(a,b)$ with $h'(\eta_1)=(h(b)-h(a))/(b-a)>0$. By continuity of
$h'$ (which is in $C^0$ since $h\in C^1$), there is an open interval
$I\subset(0,m)$ containing $\eta_1$ on which $h'\ge\delta>0$ for some
$\delta$. The set $T:=\{t\in(0,1):mt\in I\}$ is an open
subinterval of $(0,1)$ of positive Lebesgue measure.

In $\mathcal{J}(m)=m\int_0^1 t\,h'(mt)/(\beta(m)-t)dt$ (second form of~\eqref{eq:J-def}):
\begin{itemize}
\item the prefactor $m>0$;
\item the weight $t/(\beta(m)-t)>0$ for $t\in(0,1]$ (since $\beta(m)>1\ge t$);
\item $h'(mt)\ge 0$ everywhere;
\item $h'(mt)\ge\delta$ on $T$.
\end{itemize}
Hence the integrand is $\ge 0$ a.e.\ and strictly positive on the
positive-measure set $T$, giving $\mathcal{J}(m)>0$. By part (c),
$\beta'(m)>0$. The non-increasing case is symmetric.

\subsection{Proof of (e): conditional critical point}

\emph{Step 1: $\mathcal{J}(\eta^*_h)>0$.} By (H2), the closed interval
$[\eta^*_h,m^\sharp]\subset\Omega$; in particular $\eta^*_h\in\Omega$, so
by part (b) $\beta(\eta^*_h)>1$ is well-defined and $\mathcal{J}(\eta^*_h)$
is well-defined. By Assumption~\ref{ass:U}, $h$ is strictly increasing on
$(0,\eta^*_h)$, hence non-decreasing and not constant on $(0,\eta^*_h)$.
The hypothesis of part (d) (with $m=\eta^*_h$) is thus satisfied, and
applying (d) at $m=\eta^*_h$ yields $\beta'(\eta^*_h)>0$, equivalently (by
(c)) $\mathcal{J}(\eta^*_h)>0$.

\emph{Step 2: Sign change.} (H1) gives $\mathcal{J}(m^\sharp)<0$.

\emph{Step 3: Continuity of $\mathcal{J}$ on $[\eta^*_h,m^\sharp]$.}
By (H2) the compact interval $[\eta^*_h,m^\sharp]$ lies in $\Omega$,
and by~(b) $\beta$ is continuous there, so
$\beta_-:=\min_{[\eta^*_h,m^\sharp]}\beta(m)>1$. On this interval
the integrand of $\mathcal{J}(m)=m\int_0^1 t\,h'(mt)/(\beta(m)-t)\,dt$
is uniformly bounded by $m^\sharp\,\|h'\|_\infty/(\beta_-{-}1)$.
Dominated convergence and continuity of the prefactor $m$ then give
$\mathcal{J}\in C^0([\eta^*_h,m^\sharp])$.

\emph{Step 4: Intermediate value theorem.} Since $\mathcal{J}$ is
continuous on $[\eta^*_h,m^\sharp]$ with $\mathcal{J}(\eta^*_h)>0$
and $\mathcal{J}(m^\sharp)<0$, there exists
$m^\star\in(\eta^*_h,m^\sharp)$ with $\mathcal{J}(m^\star)=0$, hence
$\beta'(m^\star)=0$ by~(c).

\subsection{A uniqueness conjecture and a bimodal counterexample}

Theorem~\ref{thm:structure} gives no uniqueness statement for the
critical point produced in part~(e). We conjecture uniqueness under
strict log-concavity of $h$, and we examine numerically a bimodal
density showing what can happen when this hypothesis fails.

\begin{conjecture}[Uniqueness under log-concavity]\label{conj:uniqueness}
Assume \textup{(D0)--(D3)}, hypotheses \textup{(H1), (H2)} of
Theorem~\ref{thm:structure}(e), and that $h(\eta):=\eta\rho(\eta)$ is
positive on $(0,M)$ and strictly log-concave:
\begin{equation}\label{eq:logconcave}
\log h\bigl(\lambda x + (1-\lambda) y\bigr) \;>\; \lambda \log h(x) + (1-\lambda)\log h(y)
\qquad (x\ne y\text{ in }(0,M),\ \lambda\in(0,1)).
\end{equation}
Then the critical point $m^\star$ produced by
Theorem~\ref{thm:structure}(e) is the unique critical point of
$\beta$ on $\Omega$.
\end{conjecture}

\begin{remark}[On the hypotheses of Conjecture~\ref{conj:uniqueness}]
\label{rem:conj-hypotheses}
The positivity $h>0$ on $(0,M)$ is stated explicitly because, without
it, $h(\eta)=0$ would make $\log h=-\infty$ and~\eqref{eq:logconcave}
trivially true. Strict log-concavity together with $h(0)=h(M)=0$
already implies unimodality, which is therefore not listed
separately. Hypothesis (H1) is not implied by log-concavity; we
verify it numerically in all seven log-concave test densities of
Table~\ref{tab:critical}. Finally, $h>0$ on $(0,M)$ implies $\rho>0$
there, so Lemma~\ref{lem:I-finite}(ii) gives $\Omega=(0,M)$ and the
conjecture is a statement on the full interval.
\end{remark}

The seven log-concave rows of Table~\ref{tab:critical} support the
conjecture: in each case $\beta'$ has a single sign change on the
tested portion of $\Omega$. The examples span a range of vanishing
orders ($\alpha\in\{3,4,5\}$), polynomial and non-polynomial
densities, and modes throughout $(0,M)$. The list is biased toward
Beta-type densities; a systematic test of non-Beta-type log-concave
densities (for instance, polynomial-times-trigonometric or
polynomial-times-logarithmic) would strengthen the numerical case.

To probe what fails when log-concavity is dropped, we examine the
following bimodal density.

\begin{example}[Bimodal $\rho$ violating both unimodality and log-concavity]\label{rem:bimodal}
For $\sigma\in(0, 0.1]$ (we take $\sigma=0.05$) let
\begin{equation}\label{eq:bimodal-rho}
\rho(\eta) \;\propto\; \eta\,(1-\eta)^2\,\bigl[e^{-(\eta-0.3)^2/(2\sigma^2)} + e^{-(\eta-0.7)^2/(2\sigma^2)}\bigr], \qquad \eta\in[0,1],
\end{equation}
normalised on $[0,1]$. Each factor in~\eqref{eq:bimodal-rho} is
smooth, so $\rho\in C^\infty([0,1])$ and (D1) holds. The factor
$\eta$ gives $\rho(0)=0$; the factor $(1-\eta)^2$ gives
$\rho(\eta)=O((1-\eta)^2)$ near $\eta=1$, so (D2) holds with
$\alpha=3$; (D0) and (D3) are immediate. Setting
$h:=\eta\rho(\eta)$,
\begin{equation}\label{eq:bimodal-h}
h(\eta) \;\propto\; \eta^{2}(1-\eta)^{2}\,
\bigl[e^{-(\eta-0.3)^2/(2\sigma^2)}+e^{-(\eta-0.7)^2/(2\sigma^2)}\bigr],
\end{equation}
and both factors are invariant under $\eta\mapsto 1-\eta$, so
$h(\eta)=h(1-\eta)$; differentiating this symmetry at $\eta=1/2$
gives $h'(1/2)=0$.

We claim $h$ is bimodal at $\sigma=0.05$. Since $h\in C^\infty([0,1])$
with $h(0)=h(1)=0$ and $h>0$ on $(0,1)$, $h$ attains its maximum at
some interior point $\eta_M$. If $\eta_M=1/2$, the symmetry would
make $1/2$ the unique global maximum and $h$ unimodal. This is ruled
out numerically: at 50-digit precision $h(1/2)\approx 1.61\times 10^{-3}$
while $h(0.30895)\approx 1.7249$. By symmetry $1-\eta_M$ is then a
second maximum, and the critical point at $1/2$ is the local minimum
between them. A direct search at the same precision finds exactly
three critical points of $h$ on $(0,1)$,
\begin{equation}\label{eq:bimodal-crit}
\eta \;\in\;\{0.30895,\;0.5,\;0.69105\}.
\end{equation}
The same picture holds at $\sigma=0.06$ (three sign changes of
$\mathcal{J}$ at $m^\star\approx 0.370, 0.552, 0.751$), while at
$\sigma=0.04$ the middle root of $\mathcal{J}$ falls below our
numerical tolerance and we report three sign changes only for
$\sigma\ge 0.05$.

Since $\rho>0$ on $(0,1)$, Lemma~\ref{lem:I-finite}(ii) gives
$\Omega=(0,1)$ and Theorem~\ref{thm:structure}(a)--(d) applies;
part~(e) and Conjecture~\ref{conj:uniqueness} do not, because $h$ is
neither unimodal nor log-concave.
\end{example}

Numerically (row~8 of Table~\ref{tab:critical}), $\beta$ has three
critical points in this example, in contrast with the single
critical point observed under log-concavity. The example shows that
dropping unimodality and log-concavity jointly admits multiple
critical points, but it does not separate the roles of the two
hypotheses: a unimodal but non-log-concave $h$ producing multiple
critical points of $\beta$ would settle which of the two is
operative, and we leave that question open
(Section~\ref{sec:discussion}).

\section{Numerical evidence}\label{sec:numerics}

\subsection{Method}

For each density we solve $F(C,m)=0$ on a uniform grid of $N=2001$
points $m_i\in[0.1, M-10^{-3}]$ using Brent's method
(\texttt{scipy.optimize.brentq}, tolerance $10^{-12}$). The integral
in $F$ is computed by adaptive quadrature
(\texttt{scipy.integrate.quad}, relative tolerance $10^{-10}$).
We then evaluate $\mathcal{J}(m)$ by the same quadrature using the
analytical formula~\eqref{eq:beta-prime} of
Theorem~\ref{thm:structure}(c), which avoids numerical
differentiation. Sign changes of $\mathcal{J}$ between consecutive
grid points are flagged and refined by Brent's method on
$\mathcal{J}$ to tolerance $10^{-12}$. For row~8 of
Table~\ref{tab:critical} we recompute all reported critical points
with \texttt{mpmath} at 50-digit precision; this reproduces the
double-precision values to 8 digits.

We verified robustness of the sign-change count: doubling the grid
to $N=4001$ leaves the count unchanged and shifts each $m^\star$ by
$<10^{-5}$; replacing adaptive quadrature with 64-point
Gauss--Legendre shifts $m^\star$ by $<10^{-7}$; varying the
sign-change threshold over $[10^{-10},10^{-6}]$ leaves the count
unchanged. The lower-endpoint $m_{\min}=0.1$ is well inside $\Omega$
for all seven log-concave densities ($\mathcal{I}^+(m_{\min})>10$),
hence away from the boundary regime where the prefactor
in~\eqref{eq:beta-prime} degenerates (Proposition~\ref{rem:boundary}).

\subsection{Results}

\begin{table}[h]
\centering
\caption{Critical-point structure of $\beta(m)$. In rows~1--7
($\log$-concave $h$ satisfying (D0)--(D3)) the derivative $\beta'$
has exactly one sign change on the tested portion of $\Omega$, so
each row records a single $m^\star$ and the displacement
$\textrm{Gap}:=m^\star-\eta^*_h$ of the critical point from the mode
of $h$. Row~8 is the bimodal density of Example~\ref{rem:bimodal},
for which $\beta'$ has three interior sign changes; the
``Gap'' column is left blank there. Densities are written up to
normalisation; for $\eta^a(1-\eta)^b$ on $[0,1]$ the normalising
constant is $1/B(a+1,b+1)$.}
\label{tab:critical}
\begin{tabular}{lcccc}
\toprule
Density on $[0,1]$ & $\alpha$ & $\eta^*_h$ & $m^\star$ & Gap \\
\midrule
\multicolumn{5}{l}{\emph{Log-concave $h$:}} \\
$\eta^0(1-\eta)^2$ \emph{(Beta)} & 3 & 0.3333 & 0.4289 & 0.0955 \\
$\eta^1(1-\eta)^2$ & 3 & 0.5000 & 0.6381 & 0.1381 \\
$\eta^1(1-\eta)^3$ & 4 & 0.4000 & 0.5181 & 0.1181 \\
$\eta^2(1-\eta)^3$ & 4 & 0.5000 & 0.6354 & 0.1354 \\
$\eta^4(1-\eta)^4$ & 5 & 0.5556 & 0.6900 & 0.1345 \\
$\eta(1-\eta)^2 e^{-2\eta}$ & 3 & 0.3820 & 0.5009 & 0.1190 \\
$\cos^2(\pi\eta/2)$ & 3 & 0.4159 & 0.5267 & 0.1108 \\
\midrule
\multicolumn{5}{l}{\emph{Bimodal $h$ (Example~\ref{rem:bimodal}, $\sigma=0.05$):}} \\
\multicolumn{5}{l}{$\rho(\eta)\propto\eta(1-\eta)^2[e^{-(\eta-0.3)^2/(2\sigma^2)}+e^{-(\eta-0.7)^2/(2\sigma^2)}]$} \\
\quad bimodal $h$ ($\alpha=3$) & 3 & $0.3090,\,0.6911$ & $0.3620,\,0.5455,\,0.7502$ & --- \\
\bottomrule
\end{tabular}
\end{table}

\paragraph{Row~8: the bimodal example.}
The bimodality argument of Example~\ref{rem:bimodal} gives two local
maxima of $h$ at $\eta=0.30895, 0.69105$ (with $h=1.72488$, equal by
reflection symmetry) separated by a local minimum at $\eta=0.5$ with
$h=0.00161$ (peak-to-trough ratio $\approx 1075$). Running the
protocol of Section~\ref{sec:numerics} yields three interior sign
changes of $\mathcal{J}$ on $\Omega=(0,1)$ at
\begin{equation}\label{eq:bimodal-mstar}
m^\star \in \{0.36196,\;0.54547,\;0.75023\},
\quad
\beta(m^\star) \in \{1.34002,\;1.00051,\;1.28195\}.
\end{equation}
The middle root sits at $\beta\approx 1.00051$, close to the
boundary $\beta=1$ where the prefactor in~\eqref{eq:beta-prime}
becomes large; it would be reasonable to suspect a numerical
artefact there.

\paragraph{Verification of the middle root.}
A fine-grid diagnostic gives
$\mathcal{J}(0.540)\approx -0.894$,
$\mathcal{J}(0.545)\approx -0.074$,
$\mathcal{J}(0.550)\approx +0.689$, stable across quadrature
tolerances $10^{-10}$--$10^{-14}$. An independent \texttt{mpmath}
computation at 50-digit precision reproduces all three roots
in~\eqref{eq:bimodal-mstar} to 8 digits. The middle root is real.

Three sign changes are qualitatively consistent with $h$ being
bimodal but not formally implied by it; deriving the count of zeros
of $\mathcal{J}$ from the local-extremum structure of $h$ is open
(Section~\ref{sec:discussion}). Parts (a)--(d) of
Theorem~\ref{thm:structure} apply to row~8 directly; only part (e)
and Conjecture~\ref{conj:uniqueness} are inapplicable.

\section{Discussion}\label{sec:discussion}

\subsection{Bianconi--Barab\'asi networks and condensation}

In the Bianconi--Barab\'asi model \citep{bianconi-barabasi-2001},
$\rho$ is the fitness density of nodes, $m$ is the maximum fitness,
and $C(m)$ is the normalisation in the rate equation governing the
growth of node degrees; the ratio $\beta=C/m$ controls the resulting
degree distribution. The threshold $\mathcal{I}^+(m)=1$ is
conjecturally related to the Bose--Einstein condensation transition
in the deterministic mean-field reduction. The relationship to the
stochastic model is delicate: there, $m$ is a random order
statistic, and most fitness densities of empirical interest
(log-normal, exponential) fail (D2). For rigorous probabilistic
treatments of preferential-attachment models with fitness, see
\citet{borgs-chayes-daskalakis-roch-2007} and
\citet{dereich-moerters-2009}. Theorem~\ref{thm:structure}(a)
characterises when the deterministic mean-field equation admits a
solution; the sign formula (c) describes how the structural ratio
depends qualitatively on the upper fitness limit. A rigorous
reduction of the stochastic model to~\eqref{eq:F-def} is beyond the
scope of this paper.

\subsection{Reinforcement-driven selection}

Selection rules of the form $P(i)\propto x_i^\beta$, with $x_i$ an
accumulated score, arise in several settings, including the
nonlinear preferential-attachment networks of
\citet{krapivsky-redner-2001}. When a self-consistency equation of
the form~\eqref{eq:F-def} emerges as the mean-field reduction in
such a setting, the qualitative behaviour of $\beta$ is described by
Theorem~\ref{thm:structure}.

\subsection{The (D2) restriction}\label{sec:discussion-D2}
The principal limitation of the present analysis is hypothesis (D2),
which excludes the densities most relevant in network applications
(log-normal, exponential, Gamma). A natural approach is to use the
boundary regularisations
$\rho^{(R,\alpha)}(\eta)=\rho(\eta)(R-\eta)^{\alpha-1}/Z_{R,\alpha}$,
which satisfy (D2) for finite $R$, and to study the limit
$R\to\infty$. Whether the limit
$m^\star_\alpha:=\lim_{R\to\infty}m^\star_{R,\alpha}$ exists for
log-normal $\rho$ is open.

\section*{Acknowledgements}

The author thanks Kirill Borodin (Moscow Technical University of
Communications and Informatics, Moscow, Russian Federation;
\texttt{k.n.borodin@mtuci.ru}) for proof-reading the manuscript and
for several suggestions on presentation.

\end{document}